\documentclass{cccg23}
\usepackage{graphicx,amssymb,amsmath}

\usepackage{xspace}
\usepackage{hyperref}
\usepackage{cleveref}
\usepackage{enumitem}
\usepackage{thm-restate}
\usepackage{caption}
\usepackage{subcaption}


\newcommand{\R}{\ensuremath{\mathbb{R}}}
\newcommand{\poly}{\ensuremath{\mathsf{P}}\xspace}
\newcommand{\NP}{\ensuremath{\mathsf{NP}}\xspace}
\newcommand{\problemname}[1]{\ensuremath{#1}\text{\nobreakdash-\texttt{MinNN}}}
\DeclareMathOperator{\rel}{rel}
\DeclareMathOperator{\nn}{nn}
\newtheorem{definition}{Definition}
\crefname{obs}{observation}{observations}


\title{
Reducing Nearest Neighbor Training Sets Optimally and Exactly
}

\author{Josiah Rohrer\thanks{\mbox{Department of Mathematics,
        ETH Zurich,} {\tt rohrerj@student.ethz.ch}}
        \and
        Simon Weber\thanks{\mbox{Department of Computer Science,  ETH Zurich,} {\tt  simon.weber@inf.ethz.ch}\newline Simon Weber is supported by the Swiss National Science Foundation under project no. 204320.}}

\index{Rohrer, Josiah}
\index{Weber, Simon}


\begin{document}
\thispagestyle{empty}
\maketitle

\begin{abstract}
In nearest-neighbor classification, a \emph{training set} $P$ of points in $\R^d$ with given classification is used to classify every point in $\R^d$: Every point gets the same classification as its nearest neighbor in $P$. Recently, Eppstein [SOSA'22] developed an algorithm to detect the \emph{relevant} training points, those points $p\in P$, such that $P$ and $P\setminus\{p\}$ induce different classifications. We investigate the problem of finding the \emph{minimum cardinality reduced training set} $P'\subseteq P$ such that $P$ and $P'$ induce the same classification. We show that the set of relevant points is such a minimum cardinality reduced training set if $P$ is in general position. Furthermore, we show that finding a minimum cardinality reduced training set for possibly degenerate $P$ is in \poly for $d=1$, and \NP-complete for $d\geq 2$.
\end{abstract}

\section{Introduction}
While it is one of the oldest and simplest to describe classification techniques, \emph{nearest-neighbor classification}~\cite{cover1967nearestneighbor} is still a widely-used method in supervised learning. A \emph{training set} $P$ consisting of data points in $\R^d$ labelled with their known classifications is used to classify new points in $\R^d\setminus P$. A point $q$ gets the same classification as its nearest neighbor in $P$ (ties are either broken by some fixed rule or the point gets multiple classifications).

There are many variations of nearest-neighbor classification, such as $k$-nearest neighbor~\cite{cover1967nearestneighbor}, where a point gets the majority classification among its $k$ nearest neighbors, and approximate versions of nearest neighbor~\cite{liu2004approximate}.  In this paper we only consider the basic version described above.

Nearest neighbor classification and the need to implement it efficiently has motivated many concepts in computational geometry. Voronoi diagrams describe the decomposition of $\R^d$ into cells with the same nearest neighbor, and thus the same nearest neighbor classification~\cite{aurenhammer2013voronoi}. They have been extended to higher-order Voronoi diagrams~\cite{aurenhammer2013voronoi}, which analogously describe the cells with the same $k$ nearest neighbors. Much research has gone into efficiently computing Voronoi diagrams~\cite{dwyer1991voronoi,guibas1992incremental,watson1981voronoi} as well as point-location techniques to locate the cell of a Voronoi diagram containing a given query point~\cite{guibas1992incremental,preparata1992pointlocation}. Any technique based on explicitly storing or computing the Voronoi diagram of the training set is infeasible for higher-dimensional data, since the complexity of the Voronoi diagram of $n$ points in dimension $d$ can reach $\Theta(n^{\lceil d/2\rceil})$~\cite{seidel1990exactVoronoi}. For moderate and high dimensions, various methods for approximate nearest neighbor searching have been developed, such as quadtree-based data structures~\cite{arya1998quadtree1,arya2009quadtree2,chan1998approximate,eppstein2008skip} and locality-sensitive hashing~\cite{andoni2008hashing,datar2004hashing,gionis1999similarity,indyk1998hashing,liu2004approximate}. These methods avoid the exponential dependency on $d$, but are still only marginally better than naively computing the nearest neighbor of a query point by searching through the complete training set.

Instead of improving nearest neighbor algorithms and data structures, significant time and storage can be saved by reducing the size of the training set. A common approach to reducing the training set in a \emph{lossless} manner (without changing the classification of any query point) is to remove all non-relevant points. A \emph{relevant point} (sometimes also called \emph{border point}) is a point whose individual omission changes the classification~\cite{clarkson1994outputsensitive}. One can show that removing all non-relevant points at once yields a training set inducing the same classification as the original training set. A series of algorithms have been developed to efficiently compute the set of relevant points. The current best algorithm due to Flores-Velazco~\cite{floresvelazco2022improvedrelevant} finds the set of relevant points in any fixed dimension in $O(nk^2)$, where $k$ is the number of relevant points. This algorithm is a slightly adjusted version of the algorithm of Eppstein~\cite{eppstein2022relevantpoints}.

\begin{figure}
    \centering
    \includegraphics[width=0.8\columnwidth, keepaspectratio]{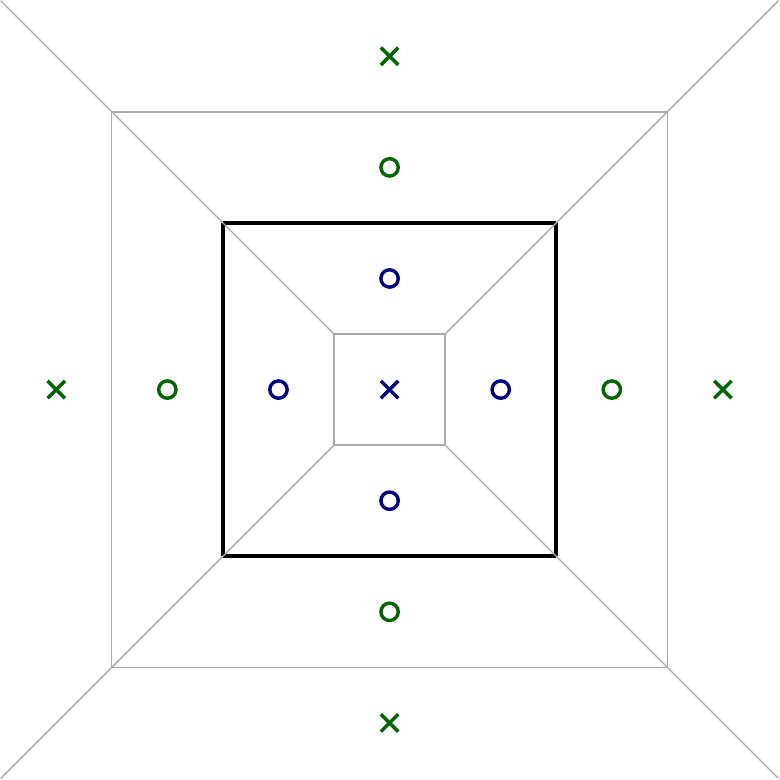}
    \caption{A training set for which the relevant points (circles) do not contain the unique minimum cardinality reduced training set (crosses).}
    \label{fig:relevantPointsSuck}
\end{figure}

The set of relevant points is not necessarily the smallest subset of the training points inducing the same classification. In fact, it is not even guaranteed that the set of relevant points contains such a smallest subset. This is illustrated in \Cref{fig:relevantPointsSuck}. In the paper introducing his algorithm to find the relevant points, Eppstein~\cite{eppstein2022relevantpoints} conjectures that in high dimensions, finding such a smallest subset with the same classification is a much harder problem than finding the relevant points. In this paper, we show that high dimensions are not needed, and this problem is already \NP-hard for binary classification in dimensions $d\geq 2$.

Note that lossless training set reduction is not the only studied method. We discuss alternative methods which are allowed to (slightly) change the induced classification later in \Cref{sec:relatedwork}.

\subsection{Definitions}
\begin{definition}
    A \emph{labelled point set} $(m, P, c)$ is given by an integer $m$, a set $P\subset \R^d$ of size $n$, and a classification function $c:P\rightarrow [m]$. We call $c(p)$ the \emph{label} of $p$.
\end{definition}

By $d(p,q)$ we denote the euclidean distance between two points $p,q$. We write $\nn(q,P)$ for the set of nearest neighbors of $q$ in $P$. We say a point set is in \emph{general position} if it contains no three collinear points and no four cocircular points.

\begin{definition}
    A labelled point set $(m, P, c)$ induces the nearest neighbor classification $f:\R^d\rightarrow 2^{[m]}$, where
    \[ f(q)=\{ c(p) \;|\; p\in nn(q,P)\} .\]
\end{definition}

\begin{definition}
    A \emph{reduced training set} of some labelled point set $(m, P, c)$ is a set $Q\subseteq P$ such that $(m, Q, c\vert_Q)$ induces the same nearest neighbor classification as $(m, P, c)$.
\end{definition}

\begin{definition}
The decision problem \problemname{d} is to decide if there exists a reduced training set of a given labelled point set $(m,P,c)$ of at most $k$ points.
\end{definition}

\begin{definition}
    A point $p\in P$ is a \emph{relevant point} if $(m,P\setminus\{p\},c\vert_{P\setminus\{p\}})$ and $(m,P,c)$ induce different nearest neighbor classifications. We write $\rel(P)$ for the set of all relevant points.
\end{definition}

\subsection{Results}
We are now ready to state our results. As our first result, we show that in the case of training sets in general position, \problemname{d} can be solved easily, since the relevant points already form a solution.

\begin{restatable}{theorem}{generalposition}\label{thm:generalposition}
For an instance of \problemname{d} where $P$ is in general position, $\rel(P)$ is the unique solution.
\end{restatable}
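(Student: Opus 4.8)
The plan is to prove the two containments that together pin down $\rel(P)$: that $\rel(P)$ is itself a reduced training set, and that every reduced training set contains $\rel(P)$. The first is the standard fact, recalled in the introduction, that deleting all non-relevant points at once preserves the classification; I would either cite this or re-derive it in a line. Granting it, the whole theorem reduces to the single claim that \emph{every reduced training set $Q$ contains every relevant point $p$}: this makes $\rel(P)$ a reduced training set contained in every reduced training set, so any reduced training set has at least $|\rel(P)|$ points, with equality exactly when it equals $\rel(P)$; hence $\rel(P)$ is the unique minimum.

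To prove that claim, fix a relevant point $p$ and a reduced training set $Q$, and suppose for contradiction that $p\notin Q$. First I would translate relevance into a statement about the Voronoi diagram of $P$: using general position (so that Voronoi facets are genuine $(d-1)$-cells), a relevant point's Voronoi cell shares a facet with the cell of a point carrying a different label. Let $p'$ be such a differently-labelled Voronoi neighbour, let $H$ be the bisecting hyperplane of $p$ and $p'$, and let $F\subseteq H$ be their shared facet. For every $x$ in the relative interior of $F$ we have $f(x)=\{c(p),c(p')\}$ for $P$, while the $P$-classification is the singleton $\{c(p)\}$ on the $p$-side and $\{c(p')\}$ on the $p'$-side of $F$ near $x$.

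The heart of the argument is to read off, from the equality of the $Q$- and $P$-classifications, two points of $Q$ whose bisector is forced to be $H$. Pick a generic $x\in\mathrm{relint}(F)$. Since $Q$ reproduces the classification, the $Q$-nearest neighbour of points just on the $p$-side of $x$ is some $a\in Q$ with $c(a)=c(p)$, and on the $p'$-side some $b\in Q$ with $c(b)=c(p')$; as $p\notin Q$ one checks that the common $Q$-distance at $x$ strictly exceeds $d(x,p)=d(x,p')$, forcing $a\neq p$ and $b\neq p'$. The $Q$-decision boundary between the $c(p)$- and $c(p')$-regions is locally the bisector of $a$ and $b$, and it must coincide with $F\subseteq H$ on a full $(d-1)$-dimensional neighbourhood; since a bisector is a hyperplane, this forces the bisector of $a$ and $b$ to equal $H$, i.e.\ $b=\sigma(a)$ where $\sigma$ denotes reflection across $H$. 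Because $p'=\sigma(p)$ as well, the four distinct points $p,p',a,b$ are symmetric across $H$: they lie in a common $2$-plane and are equidistant from a point $y\in H$ on the bisector of $p$ and $a$, hence lie on a common circle. This contradicts the no-four-cocircular hypothesis and completes the proof.

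I expect the main obstacle to be the middle step: rigorously arguing that reproducing the classification on a neighbourhood of the facet forces the local $Q$-decision boundary to be exactly the hyperplane $H$ (rather than merely to pass through $x$), and thereby extracting the exact reflection relation $b=\sigma(a)$. Once that is in hand, producing the four cocircular points---verifying they are distinct and genuinely concyclic in every dimension via the coplanar-plus-cospherical observation---is routine, and it is reassuring that the contradiction lands precisely on the stated general-position condition.
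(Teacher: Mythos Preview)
Your proposal matches the paper's proof: both show that every reduced training set contains $\rel(P)$ by arguing that each decision-boundary facet of $P$ must reappear as (a subset of) a Voronoi wall of $Q$, and then invoking the fact---your cocircularity contradiction, packaged in the paper as a separate lemma---that under general position no two distinct pairs of points share a bisecting hyperplane. One small slip: your justification that ``the common $Q$-distance at $x$ strictly exceeds $d(x,p)$'' fails when $p'\in Q$, but $b\neq p'$ still follows, since $b=p'$ together with $\sigma(a)=b$ and $\sigma(p)=p'$ would force $a=p\notin Q$.
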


If this assumption of general position is not given, the set of relevant points is not guaranteed to be an optimal solution. We show that generally, finding a minimum cardinality reduced training set is only feasible in dimension one, and \NP-complete otherwise.

\begin{theorem}\label{thm:dim1}
    \problemname{1} is in \poly.
\end{theorem}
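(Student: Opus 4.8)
The plan is to exploit the simple structure of nearest-neighbor classification on the line. First I would sort $P$ as $p_1 < \cdots < p_n$ and observe that the induced classification $f$ is a piecewise-constant function whose only interesting features are its \emph{transitions}: the midpoints $(p_i+p_{i+1})/2$ at which $c(p_i)\neq c(p_{i+1})$. Away from these finitely many midpoints $f$ is a single label, and at a transition it takes the two-element value consisting of the labels immediately to its left and right. I would prove that two labelled point sets induce the same classification if and only if they have exactly the same list of transitions, each with the same left and right label (and hence the same constant label on every maximal region between consecutive transitions). This reduces matching $f$ to reproducing a fixed sequence of transitions $t_1 < \cdots < t_r$ with prescribed region labels $c_0, c_1, \dots, c_r$.

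Next I would characterize exactly which subsets $Q\subseteq P$ reproduce this transition structure. A subset realizes the transition at $t_s$ precisely when it contains two points $a_s < t_s < b_s$ that are \emph{symmetric} about $t_s$ (so $a_s + b_s = 2t_s$), that carry the labels $c_{s-1}$ and $c_s$, and that are consecutive in $Q$. The pairs chosen for successive transitions must \emph{chain}, i.e.\ $b_s \le a_{s+1}$, which both enforces the correct left-to-right order and guarantees that each chosen pair is consecutive in $Q$; conversely any choice of symmetric, correctly-labelled pairs satisfying this chaining condition yields a valid reduced training set. Crucially, I would argue that a minimum-cardinality reduced set never needs points other than these pair points: given any valid $Q$, deleting every point strictly between the two transition-adjacent points of a region leaves the transitions (and therefore $f$) unchanged, since the deleted points share their region's label. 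Hence the minimum reduced set is exactly the union $\bigcup_s \{a_s, b_s\}$, and its size is $2r$ minus the number of \emph{shared} junctions where $b_s = a_{s+1}$; minimizing $|Q|$ is the same as maximizing the number of shared points.

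It then remains to compute an optimal choice of pairs. For each transition $t_s$ there are at most $n$ candidate symmetric pairs (one per possible right point $b$, whose partner $2t_s - b$ may or may not lie in $P$ with the correct label), and the chaining-plus-sharing conditions couple only \emph{adjacent} transitions. I would therefore solve the problem by a dynamic program (equivalently a shortest-path computation) over states $(s,(a_s,b_s))$, where the cost of extending a partial solution by the pair at $t_{s+1}$ is $1$ for the new right point $b_{s+1}$ plus $1$ for its left point $a_{s+1}$ unless $a_{s+1}=b_s$. With $O(rn)$ states and $O(n)$ transitions each, this runs in polynomial time; the case $r=0$ (a single label everywhere, answer $1$) is handled separately. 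Since $P$ itself always supplies a feasible chain, the program never fails to find a solution.

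The main obstacle is the reduction of the optimization to this clean pair-selection problem, that is, proving that a minimum solution consists solely of symmetric transition pairs and that the local chaining condition $b_s \le a_{s+1}$ exactly captures global validity. This is precisely where the degeneracy ruled out in \Cref{thm:generalposition} reappears: several distinct pairs of points may share the same midpoint $t_s$, so realizing one transition offers genuine choices, and a clever choice can let a single point serve two adjacent transitions and thereby beat the number of relevant points. Verifying that the dynamic program correctly trades off these local sharing opportunities --- rather than a greedy rule --- and carefully checking the nearest-neighbor ties exactly at the transition points are the parts that require the most care.
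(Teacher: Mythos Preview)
Your plan is correct and rests on the same structural insight as the paper: the induced classification on $\R^1$ is determined by its finitely many transition points, any reduced training set must contain for each transition a symmetric pair of consecutive points with the correct labels, and a minimum solution may be taken to consist only of such pairs, so the optimization reduces to maximizing the number of junctions where the right point of one pair coincides with the left point of the next. Where you diverge from the paper is in the algorithm. The paper packages each maximal run of coinciding pairs as a ``$k$-chain'', observes that a minimum reduced set is a union of pairwise-compatible chains, and then reduces to \emph{maximum weight independent set on interval graphs} (one interval per chain, weight $k-2$), invoking a known linear-time algorithm for that problem. You instead run a direct dynamic program on states $(s,\text{pair chosen at }t_s)$, paying $1$ or $2$ for each new pair depending on whether its left point reuses the previous right point. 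Both are polynomial; your DP is more elementary and self-contained, with an explicit $O(n^3)$ bound that avoids the detour through interval-graph theory, while the paper's route has the minor expository advantage of naming a classical problem. The content---chains, sharing, feasibility via the order constraint $b_s\le a_{s+1}$---is the same either way.
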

\begin{restatable}{theorem}{NPhardness}\label{thm:NP}
    For any fixed dimension $d\geq 2$, \problemname{d} is \NP-complete, even for binary classification, i.e., $m=2$.
\end{restatable}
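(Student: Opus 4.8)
For membership in \NP, I would use a reduced training set $Q\subseteq P$ with $|Q|\le k$ as the certificate and verify $f_Q=f_P$ in time polynomial in $n$ for fixed $d$. The bisectors of the $\binom{n}{2}$ pairs of points in $P$ are hyperplanes, and on the relative interior of each face (of every dimension) of their arrangement the sign of $d(\cdot,p_i)-d(\cdot,p_j)$ is fixed for every pair; hence $\nn(\cdot,P)$, $\nn(\cdot,Q)$, and therefore $f_P$ and $f_Q$ are all constant there. The arrangement has only polynomially many faces for fixed $d$, so picking one representative point per face and comparing the (easily computed) label sets $f_P$ and $f_Q$ at each representative decides whether $Q$ is a valid reduced training set in polynomial time, placing \problemname{d} in \NP.

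For \NP-hardness it suffices to handle $d=2$. Any planar instance can be placed in the flat $\{x_3=\dots=x_d=0\}\subseteq\R^d$; since for a query $q$ with projection $q'$ onto this flat the distance $d(q,p)=\sqrt{d(q',p)^2+h^2}$ (with $h$ the common orthogonal distance) is monotone in $d(q',p)$, every $q$ has the same nearest neighbors as $q'$, so $f_P$ and $f_Q$ agree on all of $\R^d$ iff they agree on the flat. Thus a subset is a valid reduced training set in $\R^d$ exactly when it is one in the plane, and a $2$-dimensional hardness construction transfers verbatim to every $d\ge 2$. I would obtain the planar construction by reducing from a planar \NP-hard problem with a covering/choice structure (e.g.\ Planar 3-SAT or Planar Vertex Cover), exploiting that the gap between $\rel(P)$ and a minimum reduced set only arises for degenerate (cocircular or collinear) configurations.

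The conceptual engine of the gadgets is that, in contrast to the general-position case of \Cref{thm:generalposition}, a relevant point need not belong to every reduced training set: removing a red point $p$ may flip a region to blue only because some blue point takes over, so discarding $p$ \emph{together with} that blue point keeps the classification while using fewer points. I would design degenerate configurations built around ``critical centers'' equidistant from several points of both labels; preserving $f$ at such a center forces $Q$ to retain at least one surviving point of each label at the critical radius, which realizes a ``hit at least one of these'' constraint. Variable gadgets would encode a binary choice as two mutually substitutable clusters, and clause/edge gadgets would place cocircular configurations whose label set is preserved iff at least one satisfied literal (resp.\ chosen endpoint) cluster survives. Choosing the bound $k$ appropriately then makes a reduced training set of size at most $k$ correspond exactly to a satisfying assignment (resp.\ an optimal cover), giving both directions of the reduction.

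The hard part will be controlling long-range interactions. I must prove that global classification preservation decomposes into precisely the intended local constraints, i.e.\ that deleting points inside one gadget never exposes a distant point as a spurious new nearest neighbor of some query and thereby silently repairs or breaks another gadget. This calls for separating the gadgets geometrically, bounding the reach of each removable point's Voronoi cell so that its influence stays local, and checking that all required degeneracies are realizable with rational coordinates of polynomially bounded bitsize. Establishing this locality, together with the exact correspondence between minimum reduced training sets and optimal solutions of the source instance, is where the real work of the proof lies.
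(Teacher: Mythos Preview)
Your plan is sound and aligns with the paper's approach: \NP-membership via an arrangement/Voronoi comparison in fixed dimension, and \NP-hardness via a gadget reduction from a planar source problem in $d=2$, lifted to higher $d$ by exactly the projection argument you give. A few points of comparison are worth noting. The paper reduces from V-cycle max2SAT rather than Planar 3-SAT or Planar Vertex Cover; the built-in Hamiltonian cycle through the variable vertices yields a $2$-book embedding and hence a clean linear layout of the variable--clause graph, which greatly simplifies routing the channels. More importantly, the paper's answer to the locality problem you flag is not spatial separation but \emph{shielding}: every gadget is designed so that all points outside a bounded region are classified blue, which kills long-range interactions outright (channels then carry information through this blue ``sea''). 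Finally, the gadget mechanism in the paper is less about cocircular critical centers forcing a point of each label to survive, and more about pairs of points sharing a bisecting line (precisely the degeneracy excluded in \Cref{lem:twobisectors}): a wall on the decision boundary can be maintained by either of two collinear/cocircular pairs, and that choice encodes a truth value. The clause gadget then needs $5$ points if neither incoming channel supplies a designated point and only $4$ if at least one does, so the number of clauses satisfied translates directly into points saved. Your critical-center picture is in the same spirit, but to make the counting go through you will have to turn it into constraints on $(d{-}1)$-dimensional walls rather than isolated equidistant points.
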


\subsection{Discussion}
When data points are independently sampled from a probability distribution with (e.g., Gaussian) noise, the resulting data set is in general position with probability~$1$. \Cref{thm:generalposition} thus implies that in practice, when a classification model is trained from high-precision data coming from a noisy source, computing the relevant points using the algorithms of Eppstein~\cite{eppstein2022relevantpoints} or Flores\nobreakdash-Velazco~\cite{floresvelazco2022improvedrelevant} is an efficient way to reduce the size of the training set to the optimum in a lossless fashion. The only way to reduce the size of the training set any further is to accept some small errors. While our results do not imply hardness of approximative training set reduction, \Cref{thm:NP} shows that it cannot be achieved efficiently by first ``de-noising'' the training set, and then finding the minimum cardinality reduced training set.

\subsection{Related Work}\label{sec:relatedwork}
Much of the work on nearest neighbor training set reduction has focused on detecting relevant points. Since the number of relevant points $k$ is expected to be very small compared to the total number of points $n$, algorithms to find relevant points are ideally output-sensitive. The first such algorithm has been found by Clarkson in 1994~\cite{clarkson1994outputsensitive}. Bremner et al.~\cite{bremner2005outputsensitive} improved on Clarkson's algorithm for two-dimensional data with two labels. Recently, Eppstein~\cite{eppstein2022relevantpoints} gave an algorithm for all dimensions and any number of labels, which is based on the simple geometric primitives of computing Euclidean minimum spanning trees and extreme points of point sets. Flores-Velazco~\cite{floresvelazco2022improvedrelevant} then showed that the Euclidean minimum spanning tree step can be skipped, yielding an $O(nk^2)$ algorithm for any constant dimension $d$.

Lossy reduction of nearest neighbor training sets, i.e., reduction in a way that slightly changes the classification, is often called \emph{nearest neighbor condensation} in the literature. The most common concept in condensation is that of \emph{consistent subsets}, introduced by Hart in 1968~\cite{hart1968consistent}. A consistent subset is a subset of the training points that induces the same classification on the original training set, but not necessarily on all points of $\R^d$. It is known that computing a minimum cardinality consistent subset is \NP-complete, for any number of labels $m\geq 2$~\cite{khodamoradi2018twolabels,wilfong1991consistent}. \emph{Selective subsets}~\cite{ritter1975selective} are subsets fulfilling a stronger condition than consistent subsets. Here, the distance from every point $p$ in the original training set to a point with the same classification in the subset must be smaller than the distance from $p$ to the nearest point of different classification in the original training set. Minimum cardinality selective subsets are also \NP-complete to compute~\cite{zukhba2010selective}. Flores-Velazco and Mount~\cite{floresvelazco2020coreset} introduced the approximative notions of $\alpha$-consistency and $\alpha$-selectivity and showed that it is \NP-hard not only to find minimum cardinality $\alpha$-consistent and $\alpha$-selective subsets, but also to approximate their size beyond certain approximation factors. Due to all of these \NP-hardness results, much of the recent research has focused on heuristic methods providing some guarantee on the resulting subset size~\cite{jankowski2004survey}.

So far we have only discussed training set reduction by taking a subset of the original data. Of course, another option is to construct a completely new training set that (approximately) induces the same classification as the original data while containing fewer points. Heath and Kasif~\cite{heath1993minimalvoronoi} showed that an exact version of this approach is hopeless, since they show that finding the minimum number of points needed to create a Voronoi diagram containing a given polygonal tesselation as a substructure is \NP-hard. This is a partial explanation to why this approach has not been studied much by the nearest neighbor community.

\subsection{Proof Techniques}
The proof of \Cref{thm:generalposition} is very straightforward. It makes use of the observation that in every reduced training set $Q\subseteq P$, every Voronoi wall of $P$ between two regions of different classifications must lie in the bisecting hyperplane of some pair of points. If $P$ is in general position, no two pairs of points have the same bisecting hyperplane.

To prove \Cref{thm:dim1} we provide a reduction from \problemname{1} to the problem of finding a maximum weight independent set on interval graphs, which is solvable in polynomial time~\cite{hsiao1992maxweightindependent}.

\begin{figure}
    \centering
    \includegraphics[width=\columnwidth, keepaspectratio]{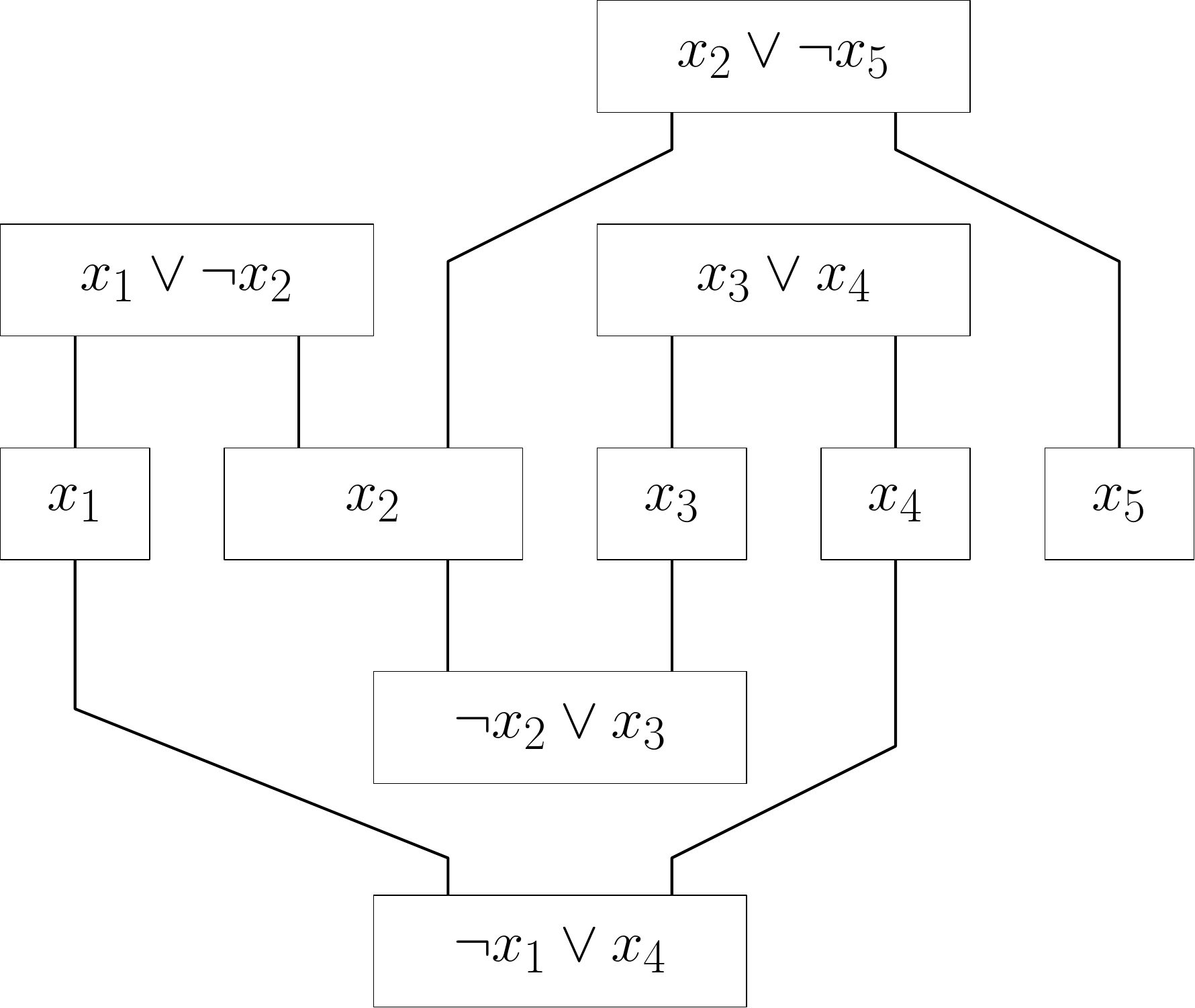}
    \caption{The type of embedding of the variable-clause graph used in the proof of \Cref{thm:NP}.}
    \label{fig:graphembedding}
\end{figure}

Our proof of \NP-hardness for \Cref{thm:NP} is similar to the proof of Heath and Kasif for the \NP-hardness of the problem of finding Voronoi covers, which works by reduction from planar 3SAT. The proofs have two major differences. On one hand, a solution to \problemname{d} must have the same classification on all of $\R^d$. In contrast, in the Voronoi cover problem, only a fixed tesselation needs to appear as a substructure in the Voronoi diagram. This means that we have to be more careful about introducing additional Voronoi walls in our training set. On the other hand, any solution to \problemname{d} must be a subset of the training set, while in the Voronoi cover problem, arbitrary points are allowed. This gives us more control about the structure of possible solutions, and allows us to exclude unwanted solutions more easily.

Our proof works by reduction from the problem \emph{V\nobreakdash-cycle max2SAT}, a variant of max2SAT in which the bipartite variable-clause graph remains planar even after adding a Hamiltonian cycle through the vertices $x_1,\ldots,x_n$ corresponding to the variables. This problem has been proven \NP-hard recently by Buchin et al.~\cite{buchin2020planarmax2Satwithcycle}. The planarity of this graph guarantees that we can efficiently find an embedding of the graph of a certain type, as shown in \Cref{fig:graphembedding}. Then, every box corresponding to a variable $x_i$ is replaced by a \emph{variable gadget}, a labelled point set with two strict subsets with the same induced classification. The choice between these two subsets indicates the value of the variable $x_i$. This value is then passed along the edges of the graph by \emph{channels}. Finally, each box corresponding to a clause $C_j$ is replaced by a \emph{clause gadget}, a labelled point set for which the size of a minimum cardinality reduced training set is decreased by one if and only if at least one of two other points is already present. These two points belong to the channels feeding in the values of the two literals of the clause. The clause gadget thus requires one fewer point iff the clause is fulfilled. The size of the minimum cardinality reduced training set for the resulting labelled point set thus allows us to determine the largest number of simultaneously fulfillable clauses in the V-cycle max2SAT instance.

The main technical challenges in this reduction are \begin{enumerate}[label=(\roman*)]
    \item avoiding unwanted interaction between gadgets, since points with different labels can interact over large distances in empty space, and
    \item ensuring that the reduction yields a point set of only a polynomial number of points, with polynomial-sized coordinates.
\end{enumerate}

\subsection{Paper Overview}
We prove \Cref{thm:generalposition} in \Cref{sec:generalposition}. Then, we prove \Cref{thm:dim1} in \Cref{sec:1d}. Finally, \Cref{sec:NP} is dedicated to proving \Cref{thm:NP}.


\section{General Position}\label{sec:generalposition}
In this section we wish to prove \Cref{thm:generalposition}:
\generalposition*
Since we already know that for a labelled point set $(m,P,c)$ the set $\rel(P)$ of relevant points is a reduced training set, it suffices to show that any reduced training set must include all relevant points.

Let us first introduce a few definitions. A \emph{Voronoi wall} is a cell of the Voronoi diagram of dimension \mbox{$d-1$}. These walls separate two fully-dimensional Voronoi cells. A Voronoi wall that separates two cells in which the nearest neighbor rule induced by $P$ gives a different classification is said to be part of the \emph{decision boundary}.

\begin{obs}\label{obs:boundarywalls}
For any reduced training set given by $Q\subseteq P$, a Voronoi wall of $P$ which is part of the decision boundary must be (a subset of) some Voronoi wall of $Q$ as well, since $Q$ must induce the same nearest neighbor classification as $P$.
\end{obs}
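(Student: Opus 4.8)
The plan is to reduce the claim to a purely local statement about the relative interior of the wall, and then to match the local picture seen by $P$ with the one seen by $Q$. Write $w$ for the given decision-boundary wall, let $a,b\in P$ be the two sites whose Voronoi cells $w$ separates, and recall $c(a)\neq c(b)$ because $w$ lies on the decision boundary. The starting observation is that for every $q$ in the relative interior $\mathrm{relint}(w)$ we have $\nn(q,P)=\{a,b\}$: such a $q$ is equidistant to $a$ and $b$ and strictly closer to them than to every other site of $P$. Consequently $f(q)=\{c(a),c(b)\}$ is a two-element set for every $q\in\mathrm{relint}(w)$.

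Next I would exploit that $Q$ induces the same classification $f$. Fix $q\in\mathrm{relint}(w)$. Since $f(q)=\{c(a),c(b)\}$ must also be read off from $Q$, the set $\nn(q,Q)$ contains a point of label $c(a)$ and a point of label $c(b)$; as $c(a)\neq c(b)$ these are distinct, so $q$ has at least two nearest neighbors in $Q$ and therefore lies on the $(d-1)$-skeleton of the Voronoi diagram of $Q$, i.e.\ on (the closure of) some Voronoi wall of $Q$. To see that this wall lies in the bisecting hyperplane $H_{ab}$ of $a$ and $b$, I pass to a small ball $B(q,\varepsilon)$. For $\varepsilon$ small enough, inside $B(q,\varepsilon)$ the $P$-classification is single-valued and equal to $\{c(a)\}$ on the open $a$-side of $H_{ab}$ and to $\{c(b)\}$ on the open $b$-side, so the region $\{f=\{c(a)\}\}\cap B(q,\varepsilon)$ is exactly an open half-ball bounded by $H_{ab}$. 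Reading the very same region through $Q$, it is an open half-ball bounded by the bisector of the nearest $Q$-sites at $q$. Since an open half-ball determines its bounding hyperplane and the side on which it lies, the two hyperplanes coincide, and hence the $Q$-wall through $q$ lies in $H_{ab}$.

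Finally I would assemble the local conclusions into the global statement. The points of $\mathrm{relint}(w)$ having exactly two nearest neighbors in $Q$ land in the relative interiors of $Q$-walls contained in $H_{ab}$; the remaining points of $\mathrm{relint}(w)$ have three or more nearest neighbors in $Q$ and thus lie on the $(d-2)$-skeleton, hence on the boundaries of such walls, and taking closures covers all of $w$. The step I expect to be the real obstacle is controlling these degenerate points and, more importantly, ruling out that $\mathrm{relint}(w)$ is split among several distinct $Q$-walls that happen to share the hyperplane $H_{ab}$ (which can occur when $Q$ has two site-pairs with the same bisector); the half-ball argument already forces every covering wall into $H_{ab}$, and under the general-position hypothesis of \Cref{thm:generalposition} no two pairs of points of $Q\subseteq P$ share a bisecting hyperplane, so $H_{ab}$ carries a unique $Q$-wall and $w$ is contained in that single wall, as claimed.
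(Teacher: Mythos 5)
Your proof is correct and follows essentially the same approach as the paper: the paper states this as an observation whose entire justification is the inline remark that $Q$ induces the same classification, and your local argument (two-valued classification on the relative interior of the wall, the half-ball matching of bisecting hyperplanes, and the closure argument over the measure-zero degenerate points) is exactly the formalization that remark presupposes. The subtlety you flag at the end is genuine — for degenerate $P$ the wall of $P$ can indeed be split among several $Q$-walls sharing the same bisecting hyperplane, so the literal single-wall phrasing needs the general-position hypothesis — but as you note this is resolved by \Cref{lem:twobisectors}, which is precisely how the paper combines the observation with that lemma in the proof of \Cref{thm:generalposition}.
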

 Note that every Voronoi wall $W$ is a subset of the bisecting hyperplane of the two points belonging to the incident Voronoi cells. Since $W$ is $d-1$-dimensional, it uniquely determines this hyperplane. We next show that under the general position assumption, every hyperplane can be the bisecting hyperplane of at most one pair of points.

\begin{lemma}\label[lemma]{lem:twobisectors}
For any point set $Q\subset \R^d$ in general position, no two distinct pairs of points in $Q$ have the same bisecting hyperplane.
\end{lemma}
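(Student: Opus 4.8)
The plan is to argue by contradiction. Suppose two distinct pairs $\{p_1,q_1\}$ and $\{p_2,q_2\}$ of points of $Q$ share a common bisecting hyperplane $H$. The first observation is that the bisecting hyperplane of a pair is precisely its perpendicular bisector: it passes through the midpoint and is orthogonal to the connecting segment. Hence $H$ forces each pair to be a \emph{reflection pair}: writing $\sigma$ for the reflection of $\R^d$ across $H$, we have $q_1=\sigma(p_1)$ and $q_2=\sigma(p_2)$. Introducing a unit normal $\mathbf n$ of $H$ together with the midpoints $m_1,m_2\in H$, I can write $p_i=m_i+a_i\mathbf n$ and $q_i=m_i-a_i\mathbf n$ with $a_i\neq 0$.

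The goal is then to exhibit a general-position violation among these four points. First I would establish that they are \emph{coplanar}: since $\mathbf n$ is orthogonal to the difference $m_2-m_1\in H$, the two-flat $\Pi=m_1+\operatorname{span}(\mathbf n,\, m_2-m_1)$ contains all of $p_1,q_1,p_2,q_2$. This step is what lets me speak of the four points being cocircular (lying on a genuine circle) rather than merely cospherical, and it is the crux of the argument.

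I would then split into two cases according to whether the midpoints coincide. If $m_1=m_2$, all four points lie on the single line $m_1+\R\mathbf n$; since the pairs are distinct, the multiset $\{a_1,-a_1,a_2,-a_2\}$ takes at least three distinct values, giving three collinear points and contradicting general position. If $m_1\neq m_2$, the configuration lies in the two-flat $\Pi$ and is symmetric about the line $\ell=\Pi\cap H$. A center $c$ equidistant from all four points must lie in $H$ (this is the equidistance within each reflection pair) and on the perpendicular bisector of $p_1p_2$ inside $\Pi$; since $m_1\neq m_2$, these two lines of $\Pi$ are not parallel and so meet in a point $c$. As $c\in\ell\subset H$ is automatically equidistant from $p_i$ and $q_i$, it is equidistant from all four, which are moreover distinct (distinct midpoints preclude any coincidence). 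Thus the four points are cocircular, again contradicting general position.

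The main obstacle I anticipate is precisely the distinction between \emph{cocircular} and cospherical in dimensions $d\geq 3$: merely producing a common center shows only that the four points lie on a common sphere, which is a weaker and harmless condition. The coplanarity step above is what upgrades this to genuine cocircularity and makes the general-position hypothesis bite; checking that the relevant auxiliary lines in $\Pi$ are non-parallel (equivalently, that $m_1\neq m_2$ in the second case) is the routine computation that carries this through.
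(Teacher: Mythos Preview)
Your argument is correct and follows essentially the same route as the paper: reach a contradiction by showing the four points are coplanar and then either collinear or cocircular. The only difference is presentational—the paper observes that the four points form an isosceles trapezoid and cites the classical fact that such trapezoids are cyclic quadrilaterals, whereas you explicitly construct the circumcenter as the intersection of $\ell=\Pi\cap H$ with the perpendicular bisector of $p_1p_2$; your version also makes the cocircular-versus-cospherical distinction in $d\geq 3$ more explicit.
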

\begin{proof}
Towards a contradiction, assume $a,b\in Q$ and $c,d\in Q$ have the same bisecting hyperplane. First note that $a,b,c,d$ must all be distinct, since $a,b$ and $a,d$ have different bisecting hyperplanes if $b\neq d$.

The points $a,b,c,d$ can not be collinear, since general position assumption requires that no three points are collinear. Since $a,b$ and $c,d$ have the same bisecting hyperplane, the lines $ab$ and $cd$ must be parallel. Thus, $a,b,c,d$ must lie on a common plane. Furthermore, they must be the corners of an isoceles trapezoid, a \emph{cyclic quadrilateral}~\cite{usiskin2008quadrilaterals}. Thus, $a,b,c,d$ are cocircular, forming a contradiction with the general position assumption.
\end{proof}

We know by the proofs in \cite{eppstein2022relevantpoints} that every relevant point $p\in \rel(P)$ shares a Voronoi wall with some point $q$ with a label $c(q)\neq c(p)$, i.e., a wall that is part of the decision boundary. By \Cref{obs:boundarywalls} and \Cref{lem:twobisectors}, we know that $p,q$ must therefore be part of every reduced training set. Since this holds for every relevant point $p$, any reduced training set given by $Q$ must contain the set $\rel(P)$, and \Cref{thm:generalposition} follows.


\section{The One-Dimensional Case}\label{sec:1d}
In this section, we provide a polynomial-time algorithm to find a minimum cardinality reduced training set in $\R^1$. In other words, we prove \Cref{thm:dim1}, $\problemname{1}\in\poly$.

The classification induced by the given training set $(m,P,c)$ decomposes $\R^1$ into a set $C=\{C_1,\ldots,C_t\}$ of open intervals of equal classification. We call the set $B=\{b_1,\ldots,b_{t-1}\}$ of points between these open intervals the \emph{decision boundary points}. We begin with some observations holding for any reduced training set $(m,Q,c\vert_Q)$. First, for any $i\in[t]$, $Q\cap C_i\neq\emptyset$. Second, for any $i\in[t-1]$, $b_i$ is the midpoint between its closest larger and smaller neighbor in $Q$. Finally, for any minimum cardinality reduced training set, we must have $Q\cap C_i\leq 2$.

Intuitively, towards a small reduced training set, we have to find a subset of $P$ which often contains only one point per interval $C_i$, with this point being involved in defining both $b_{i-1}$ and $b_i$. We formalize this in the following notion of a \emph{chain}, as illustrated in \Cref{fig:kchain}.

\begin{definition}
A $k$-chain is a set $Q:=\{q_1,\ldots, q_k\}\subseteq P$, for which there exists an integer $i$ such that:\\
(i) for any $j\in\{1,\ldots,k\}$, $q_j\in C_{i+j}$, and\\
(ii) for any $j\in\{1,\ldots,k-1\}$, $\frac{q_j+q_{j+1}}{2}=b_{i+j}$. We then say that $P'$ \emph{covers} the boundary points $b_i,\ldots,b_{i+k-1}$.
\end{definition}

\begin{figure}
    \centering
    \includegraphics[width=\columnwidth,keepaspectratio]{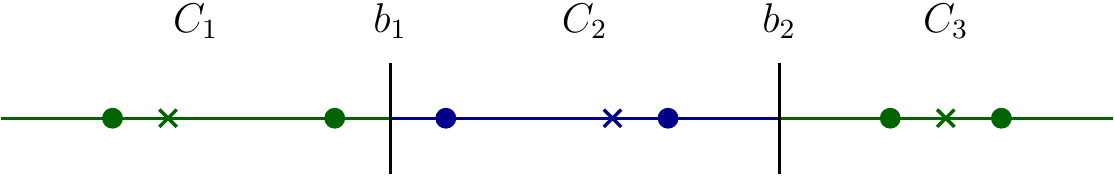}
    \caption{A labelled point set in $\R^1$ and a $3$-chain (crosses) covering $b_1$ and $b_2$.}
    \label{fig:kchain}
\end{figure}

We say that two chains $Q,Q'$ are \emph{compatible}, if the intervals $[\min(Q),\max(Q)]$ and $[\min(Q'),\max(Q')]$ are disjoint. Compatible chains therefore cover disjoint sets of boundary points.

We now see that any minimum cardinality reduced training set must be the union of pairwise compatible chains. Any set of pairwise compatible chains can furthermore be completed to a reduced training set by adding $2$-chains (consisting of relevant points) to cover the remaining uncovered boundary points.

If our reduced training set is a union of $k_1-,\ldots,k_\ell-$chains, the total number of points is $2(t-1)-\sum_{i=1}^\ell (k_i-2)$, since every $k_i$-chain allows us to save a point in the $k_i-2$ intervals between their first and last covered boundary points, compared to a naive solution with $2(t-1)$ points consisting only of $2$-chains.

We are now ready to state our complete algorithm. First, we compute the set of boundary points and the set of all chains. Note that we can easily compute the set of all chains in $O(n^2)$, and that there are at most $n^2$ of them. Then, we associate each chain $Q$ with the interval $[\min(Q),\max(Q)]$. Note now that a set of pairwise compatible chains is an independent set in the interval graph given by these intervals. We give each $2$-chain a tiny weight $\epsilon>0$, and each $k$-chain for $k>2$ the weight $k-2$. Finally, we use the dynamic programming approach of \cite{hsiao1992maxweightindependent} to find the \emph{maximum weight independent set (MWIS)} within this graph. This algorithm is linear in the number of vertices, thus takes $O(n^2)$ in our case.

The resulting independent set corresponds to an inclusion-maximal independent set, with the maximum weight among all such sets. Its corresponding chains thus cover all boundary points, and their union is a minimum cardinality reduced training set.

\section{NP-Completeness}\label{sec:NP}

In this section, we prove \Cref{thm:NP}.
\NPhardness*

We run the proof of \NP-hardness for $d=2$, since any instance of \problemname{2} can be embedded in a $2$-dimensional subspace of $\R^d$ to yield an instance of \problemname{d}.
The proof works by reduction from the following problem, which has been proven \NP-hard by Buchin et al. \cite{buchin2020planarmax2Satwithcycle}\footnote{A proof can be found in the appendix of the arXiv preprint~\cite{buchin2019planarmax2SatwithcyclearXiv}.}.

\begin{definition}\label[definition]{def:vcycle2SAT}
    A conjunctive normal form formula $\phi=C_1\wedge \ldots\wedge C_b$ over the variables $x_0,\ldots,x_{a-1}$ and an integer $k$ form an instance of the \emph{V-cycle max2SAT} problem, if every clause $C_i$ consists of at most two literals and the graph \mbox{$G_\phi=(V,E)$} is planar, where
\begin{align*}
V&=\{C_i\;|\;i\in [b]\}\cup \{x_i\;|\;0\leq i\leq a-1\},\\
E&=\{(C_i,x_j)\;|\;x_j\in C_i\}\cup \{(x_i,x_{i+1\;\mathrm{mod}\;a})\;|\;i\in [a]\}.
\end{align*}
The task is to decide whether there exists an assignment of the variables $x_0,\ldots,x_{a-1}$, such that at least $k$ clauses of $\phi$ are fulfilled.
\end{definition}

Note that the graph $G'_\phi$ obtained by replacing every clause vertex $C_i$ (of degree $2$) in $G_\phi$ by an edge is Hamiltonian, with the Hamiltonian cycle $(x_0,\ldots, x_{a-1},x_0)$.

An $n$-book embedding~\cite{bernhart1979bookthickness} of a graph is an embedding into the space of $n$ half-planes with the same bounding line $\ell$, such that all vertices are distinct points on $\ell$, every edge intersects the interior of exactly one half-plane, and no edges intersect except at common endpoints. The book-thickness $bt(G)$ of a graph is the minimum $n$ such that $G$ has an $n$-book embedding.
\begin{lemma}[\cite{bernhart1979bookthickness}]
    For every planar (sub-)Hamiltonian graph $G$, $bt(G)\leq 2$.
\end{lemma}
The proof of this lemma is constructive, and if a Hamiltonian path in $G$ is known, the construction can be performed in polynomial time. Note now that a $2$-book embedding of $G'_\phi$ can be turned into a planar embedding of $\phi$ of the form in \Cref{fig:graphembedding} by reintroducing the clause vertices and removing the Hamiltonian path.

Given this embedding of $\phi$, we will now construct a labelled point set $(2,P_\phi,c_\phi)$ encoding $\phi$. For readability, we call the two labels ``red'' and ``blue''. The goal is that there is some number of points $n'$, such that there exists a reduced training set with at most $n'-k$ points if and only if there exists an assignment of the variables in $\phi$ fulfilling at least $k$ clauses.

To translate the embedding of $\phi$ into a training set, we show how to encode variables using \emph{variable gadgets}, pass those variables along polylines using \emph{channels}, and how to encode clauses using \emph{clause gadgets}. To ensure that gadgets only interact with each other as intended, each gadget is designed to classify all points outside of some bounded area to be blue.

A variable gadget is given by the labelled point set shown in \Cref{fig:variable_raw}. In each half (upper and lower) of this point set, there are only two strict subsets (\Cref{subfig:variable_raw_all}) which lead to the same classification: The outer points encoding true (\Cref{subfig:variable_raw_true}), and the inner points encoding false (\Cref{subfig:variable_raw_false}). Note that both of these subsets have the same number of points.

\begin{figure}
    \centering
    \begin{subfigure}{1\columnwidth}
    \includegraphics[width=\textwidth,keepaspectratio]{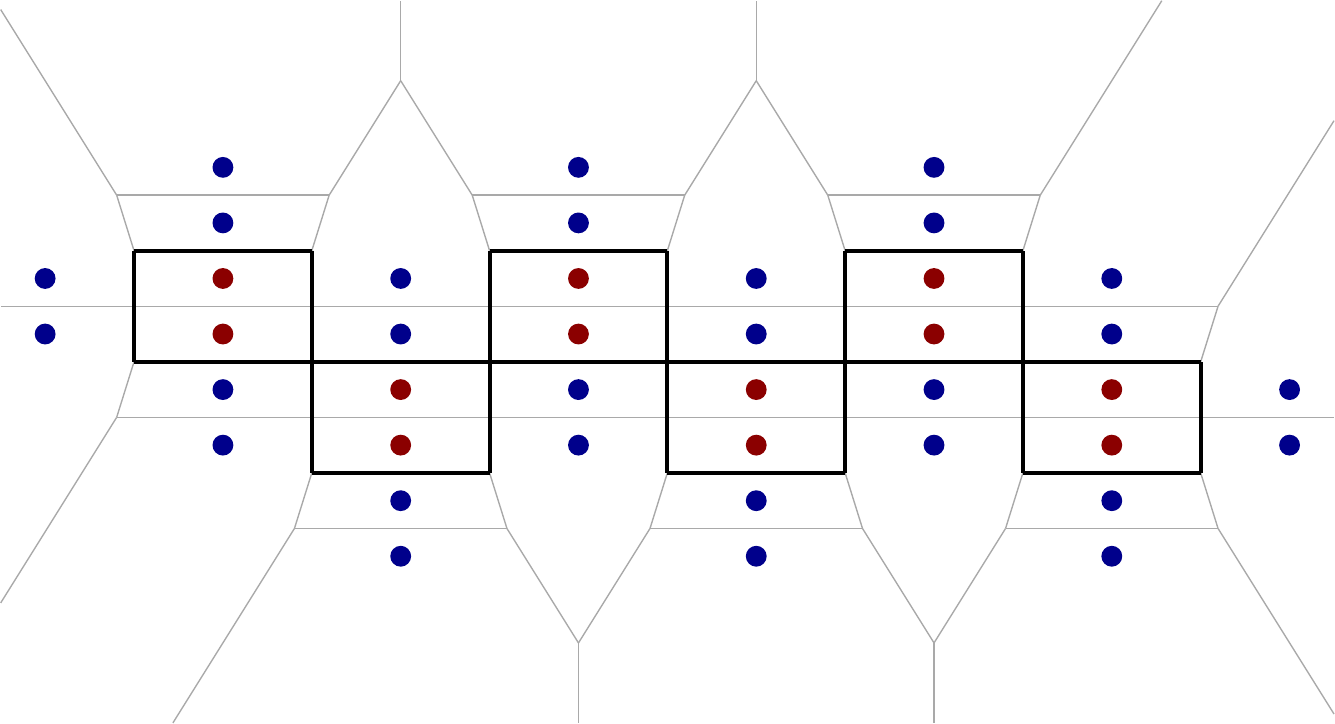}
    \caption{The complete training set.}\label{subfig:variable_raw_all}
    \end{subfigure}
    \begin{subfigure}{0.49\columnwidth}
    \includegraphics[width=\textwidth,keepaspectratio,page=2]{figs/variable_raw.pdf}
    \caption{The ``true'' subset.}\label{subfig:variable_raw_true}
    \end{subfigure}
    \hfill
    \begin{subfigure}{0.49\columnwidth}
    \includegraphics[width=\textwidth,keepaspectratio,page=3]{figs/variable_raw.pdf}
    \caption{The ``false'' subset.}\label{subfig:variable_raw_false}
    \end{subfigure}
    \caption{The variable gadget.}
    \label{fig:variable_raw}
\end{figure}

The 6 uppermost and 6 lowermost points in \Cref{subfig:variable_raw_all} are \emph{shielding points} only used to ensure that every point outside of the gadget is classified blue.
Channels are now attached on the top and bottom of the variable gadget to connect the variable gadget to clause gadgets above or below.
If a channel connects to a clause gadget in which the variable occurs positively, the channel is attached to a column without shielding points. Otherwise, the variable occurs negatively, the channel is attached to a column with shielding points. Channels are never attached to the outermost columns, and channels on the same side of the variable gadget always leave at least two columns in between each other unused. Note that a variable gadget can be extended horizontally by adding more of the repeating point pattern, to allow for an arbitrary number of channels.
A variable gadget with two attached channels is shown in \Cref{fig:variable_with_attachments}.

\begin{figure}
    \centering
    \begin{subfigure}{1\columnwidth}
    \includegraphics[width=\textwidth,keepaspectratio]{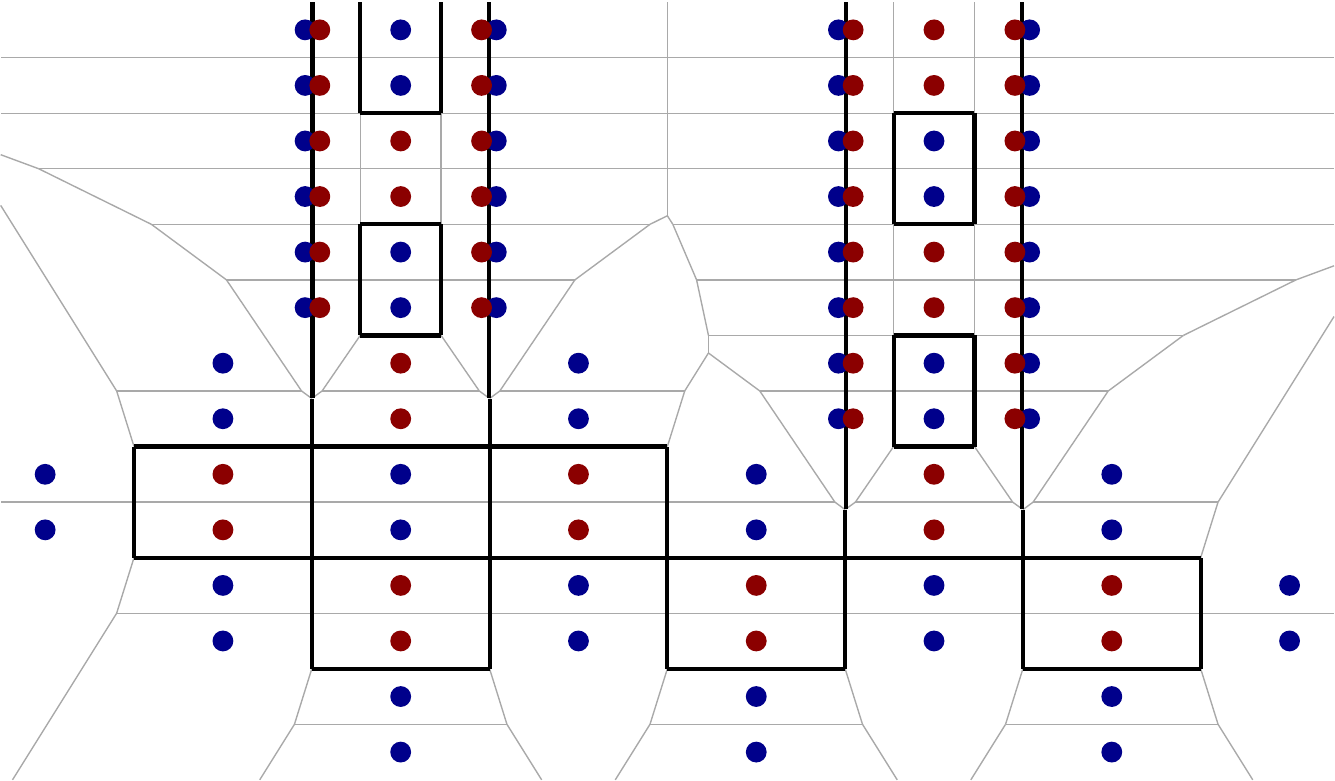}
    \caption{The complete training set.}\label{subfig:variable_with_attachments_all}
    \end{subfigure}
    \begin{subfigure}{0.49\columnwidth}
    \includegraphics[width=\textwidth,keepaspectratio,page=2]{figs/variable_with_attachments.pdf}
    \caption{The ``true'' subset.}\label{subfig:variable_with_attachments_true}
    \end{subfigure}
    \hfill
    \begin{subfigure}{0.49\columnwidth}
    \includegraphics[width=\textwidth,keepaspectratio,page=3]{figs/variable_with_attachments.pdf}
    \caption{The ``false'' subset.}\label{subfig:variable_with_attachments_false}
    \end{subfigure}
    \caption{A variable gadget with two channels attached on the top. The left channel leads to a clause in which the variable occurs positively, and the right channel leads to a clause in which the variable occurs negatively.}
    \label{fig:variable_with_attachments}
\end{figure}

The truth value carried by a channel will be interpreted by a clause gadget by checking for the presence of a certain point $p$ in an arrangement of points introduced at the end of a channel. This arrangement is shown in \Cref{fig:channel_cap}.

\begin{figure}
    \centering
    \begin{subfigure}{0.32\columnwidth}
    \includegraphics[width=\textwidth,keepaspectratio]{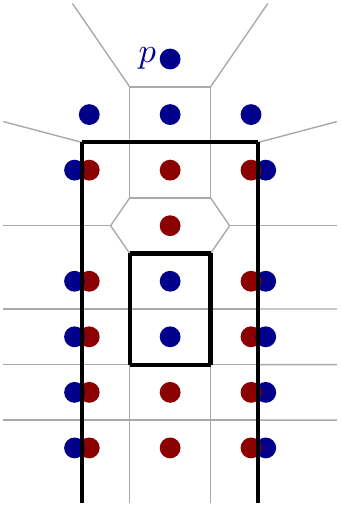}
    \caption{All points.}\label{subfig:channel_cap_all}
    \end{subfigure}
    \begin{subfigure}{0.32\columnwidth}
    \hfill
    \includegraphics[width=\textwidth,keepaspectratio,page=2]{figs/channel_cap.pdf}
    \caption{``true'' subset.}\label{subfig:channel_cap_true}
    \end{subfigure}
    \hfill
    \begin{subfigure}{0.32\columnwidth}
    \includegraphics[width=\textwidth,keepaspectratio,page=3]{figs/channel_cap.pdf}
    \caption{``false'' subset.}\label{subfig:channel_cap_false}
    \end{subfigure}
    \caption{The end of a channel. Point $p$ is present if the value carried by the channel is true.}
    \label{fig:channel_cap}
\end{figure}

We wish to make a short remark on the distances between points in our training set. We say the vertical distance between two neighboring points in a variable gadget (and channel) is~$1$. The horizontal distance between points in a variable gadget is chosen to be $3.2$. This is a very deliberate choice, since on the one hand, a too small horizontal distance would make the point $p$ obsolete in \Cref{subfig:channel_cap_true}. On the other hand, a too large horizontal distance would make some of the pairs of red and blue points that are used to generate the left and right vertical boundary of the channel obsolete. It turns out that $3.2$ is a value that avoids both of these issues simultaneously.

To allow us to connect all variable and clause gadgets, channels need some more flexibility. \Cref{fig:channel_bends} shows how \emph{bends} of some fixed small angle can be achieved in channels. This angle is an irrational constant (close to~$10^\circ$) chosen such that its sine and cosine are rational numbers. This ensures that all coordinates of points are rational. We can also \emph{stretch} channels longitudinally (in the direction of their repeating pattern), by simply increasing the distance between a row with blue center point and a row with red center point. With the capability of creating bends and stretching longitudinally, we can make the endpoint of a channel lie at the locations needed to attach to the clause gadgets.

\begin{figure}
    \centering
    \begin{subfigure}{0.32\columnwidth}
    \includegraphics[width=\textwidth,keepaspectratio]{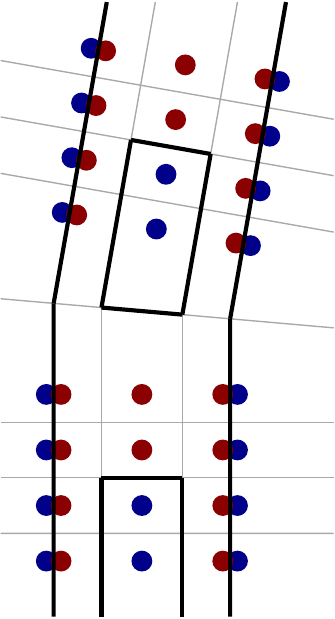}
    \caption{All points.}\label{subfig:channel_bends_all}
    \end{subfigure}
    \hfill
    \begin{subfigure}{0.32\columnwidth}
    \includegraphics[width=\textwidth,keepaspectratio,page=2]{figs/channel_bends.pdf}
    \caption{``true'' subset.}\label{subfig:channel_bends_true}
    \end{subfigure}
    \hfill
    \begin{subfigure}{0.32\columnwidth}
    \includegraphics[width=\textwidth,keepaspectratio,page=3]{figs/channel_bends.pdf}
    \caption{``false'' subset.}\label{subfig:channel_bends_false}
    \end{subfigure}
    \caption{A bend of $\approx 10^\circ$ in a channel.}
    \label{fig:channel_bends}
\end{figure}

A clause gadget is shown in \Cref{fig:clause}. It contains two special points, marked $a_1$ and $a_2$. These points are the locations of the endpoints of the two channels carrying the values of the involved variables to the clause gadget. Note that to be able to fit these channels without disturbing the function of the clause gadget, $a_1$ and $a_2$ need to have small vertical distance (say, $0.5$) and large horizontal distance (say, $5$). The clause gadget can also be drawn with these distances, but for legibility, the horizontal and vertical distances have been equalized in \Cref{fig:clause}. We include a figure showing the clause gadget with correct distances in \Cref{app:clausegadget}.

If the value carried by the first (second) channel is true, the point $a_1$ ($a_2$) is part of the reduced training set, and otherwise it is not. The clause gadget is built in such a way that it needs $5$ points if neither of $a_1$ and $a_2$ is present (\Cref{subfig:clause_noActiv}), and $4$ points otherwise (\Cref{subfig:clause_activA,subfig:clause_activB,subfig:clause_activBoth}). Thus, we can save one point in the clause gadget if and only if the corresponding clause is fulfilled by the variable assignment corresponding to the points present in the channels.

\begin{figure}
    \centering
    \begin{subfigure}{1\columnwidth}
    \centering
    \includegraphics[width=0.8\textwidth,keepaspectratio]{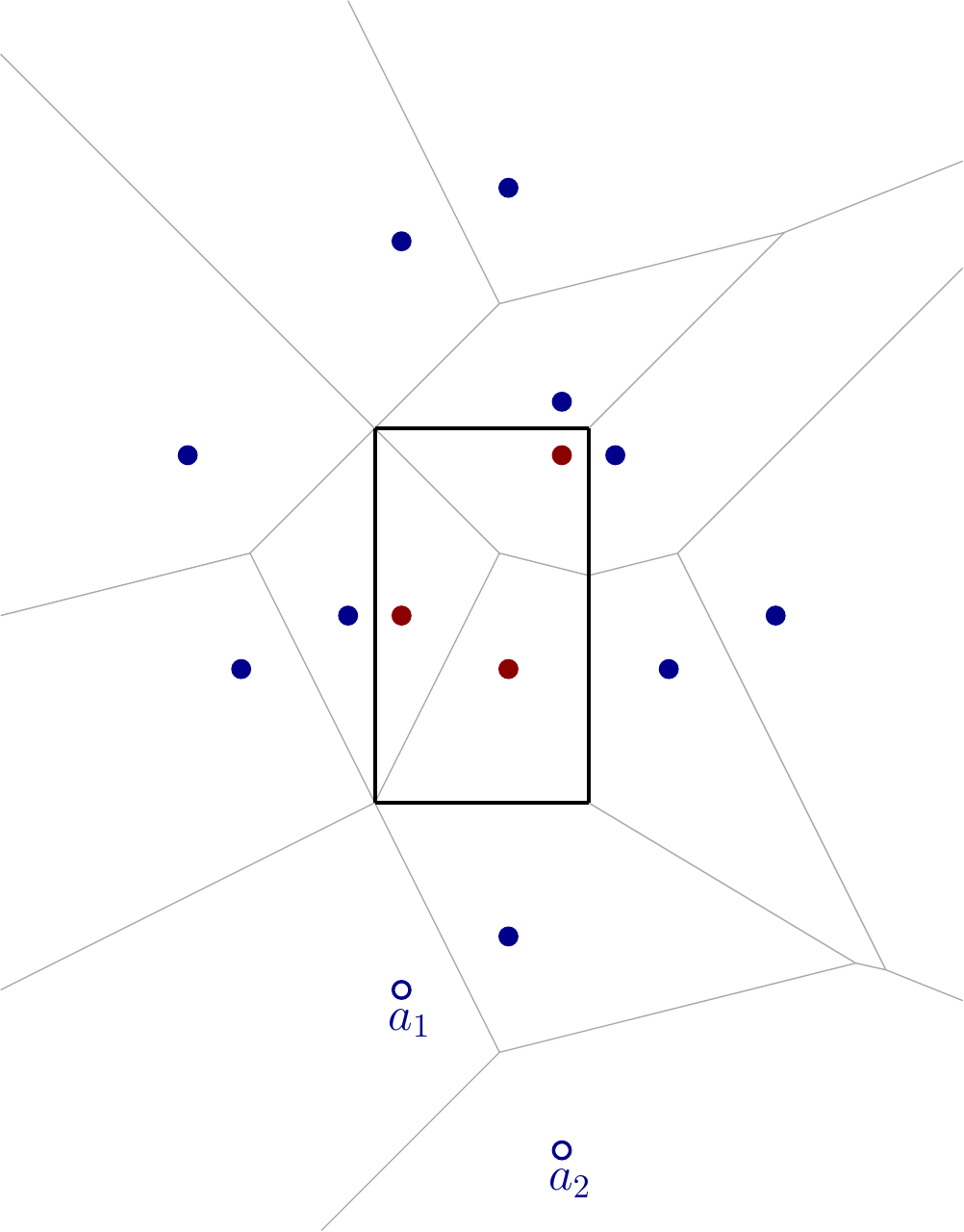}
    \caption{The complete training set.}\label{subfig:clause_all}
    \end{subfigure}
    \begin{subfigure}{0.49\columnwidth}
    \includegraphics[width=\textwidth,keepaspectratio,page=2]{figs/clause.pdf}
    \caption{Both variables false.}\label{subfig:clause_noActiv}
    \end{subfigure}
    \hfill
    \begin{subfigure}{0.49\columnwidth}
    \includegraphics[width=\textwidth,keepaspectratio,page=4]{figs/clause.pdf}
    \caption{First variable true.}\label{subfig:clause_activA}
    \end{subfigure}
    \begin{subfigure}{0.49\columnwidth}
    \includegraphics[width=\textwidth,keepaspectratio,page=3]{figs/clause.pdf}
    \caption{Second variable true.}\label{subfig:clause_activB}
    \end{subfigure}
    \hfill
    \begin{subfigure}{0.49\columnwidth}
    \includegraphics[width=\textwidth,keepaspectratio,page=5]{figs/clause.pdf}
    \caption{Both variables true.}\label{subfig:clause_activBoth}
    \end{subfigure}
    \caption{The clause gadget.}
    \label{fig:clause}
\end{figure}

We have now introduced all necessary gadgets and techniques for the proof of \Cref{thm:NP}

\begin{proof}
For any fixed $d$, we can verify that a given subset of points is a reduced training set by computing the Voronoi diagrams of the training set and the subset and comparing the classifications. As the Voronoi diagram of $n$ points in $\R^d$ can be computed in polynomial time~\cite{chazelle1993convexhull}, this proves \NP-containment.

\newpage Towards proving \NP-hardness, we reduce from V\nobreakdash-cycle max2SAT, as in \Cref{def:vcycle2SAT}. Given an instance $(\phi,k)$ of V-cycle max2SAT, we can find a $2$-book embedding of $G'_\phi$ in polynomial time. This is translated to an embedding of the bipartite variable-clause graph of $\phi$ as in \Cref{fig:graphembedding}. Every box corresponding to a variable is replaced by a variable gadget, every box corresponding to a clause is replaced by a clause gadget, and the edges are replaced by channels (possibly with up to $2$ bends and some number of stretchings). This point set can be constructed in polynomial time, since it contains a polynomial number of points, and all points can be placed on rational coordinates requiring at most polynomially many bits to describe.

Let $n_1$ be the number of points in the reduced training set of all variable gadgets and channels corresponding to some fixed assignment of truth values to variables. Let $n_2$ be $5\cdot b$ (recall that $b$ is the number of clauses in $\phi$). We will now prove that there exists a minimum cardinality reduced training set of size at most $n_1+n_2-k$ if and only if there exists an assignment of variables fulfilling at least $k$ clauses of $\phi$.

The ``if'' direction is trivial: If such an assignment of variables exists, it can clearly be translated into a reduced training set of the correct size, since each fulfilled clause gadget only requires four points.

For the ``only if'' direction, we argue that any reduced training set can be turned into a reduced training set corresponding to a variable assignment, without increasing the number of points. 
We first consider the channels. If any channel (including its end) is using any reduced subset other than the ``true'' or ``false'' subset, it must be using additional points. We change the subset to be the subset matching the truth value carried by the side of the variable gadget the clause is attached to. This may cost us one additional point in the clause gadget, but will also save at least one point in the channel. 
Next, we fix the variables. If any variable gadget is using any reduced subset other than the ``true'' or ``false'' subset, it is using all points on at least one of the two (upper and lower) halves, and in the other half it must be using either (a) all points, or (b) only the ``false'' subset. Let $m$ be the number of columns in the gadget. In case (a), we can switch the gadget to using the ``false'' subset in both halves, and the attached channels and clauses are switched accordingly. This may cost one point per connected clause gadget, but saves at least $2m$ points in the variable gadget, which is strictly more since not all columns can be occupied by channels on each side. In case (b), we switch the gadget to using the ``true'' subset in both halves. This may cost one point per clause gadget connected to the side that was previously false, but saves at least $m$ points in the variable gadget, which is again strictly more.
We thus conclude that our training set has at least one minimum cardinality reduced training set which corresponds to a variable assignment, proving the correctness of our reduction.
\end{proof}

\small{\paragraph{Acknowledgments.} We thank David Eppstein for his great talk at SOSA'22 that inspired this work, and Bernd Gärtner for his valuable advice.}
\small
\bibliographystyle{abbrv}
\bibliography{literature}

\clearpage

\appendix
\onecolumn
\section{Clause Gadget}\label{app:clausegadget}
Below, in \Cref{fig:true_clause}, the clause gadget is shown to scale, with the distance between the two attachment point for channels $a_1$ and $a_2$ having horizontal distance of $5$ and vertical distance of $0.5$. The vertical stretching of the gadget is necessary to ensure that the classification remains correct in the case where both variables are true (\Cref{subfig:true_clause_both}). If the gadget would be less stretched, a Voronoi wall between the red point and $a_2$ would appear.
\begin{figure}[h]
    \centering
    \begin{subfigure}{0.45\textwidth}
        \centering
        \includegraphics[width=\textwidth,keepaspectratio]{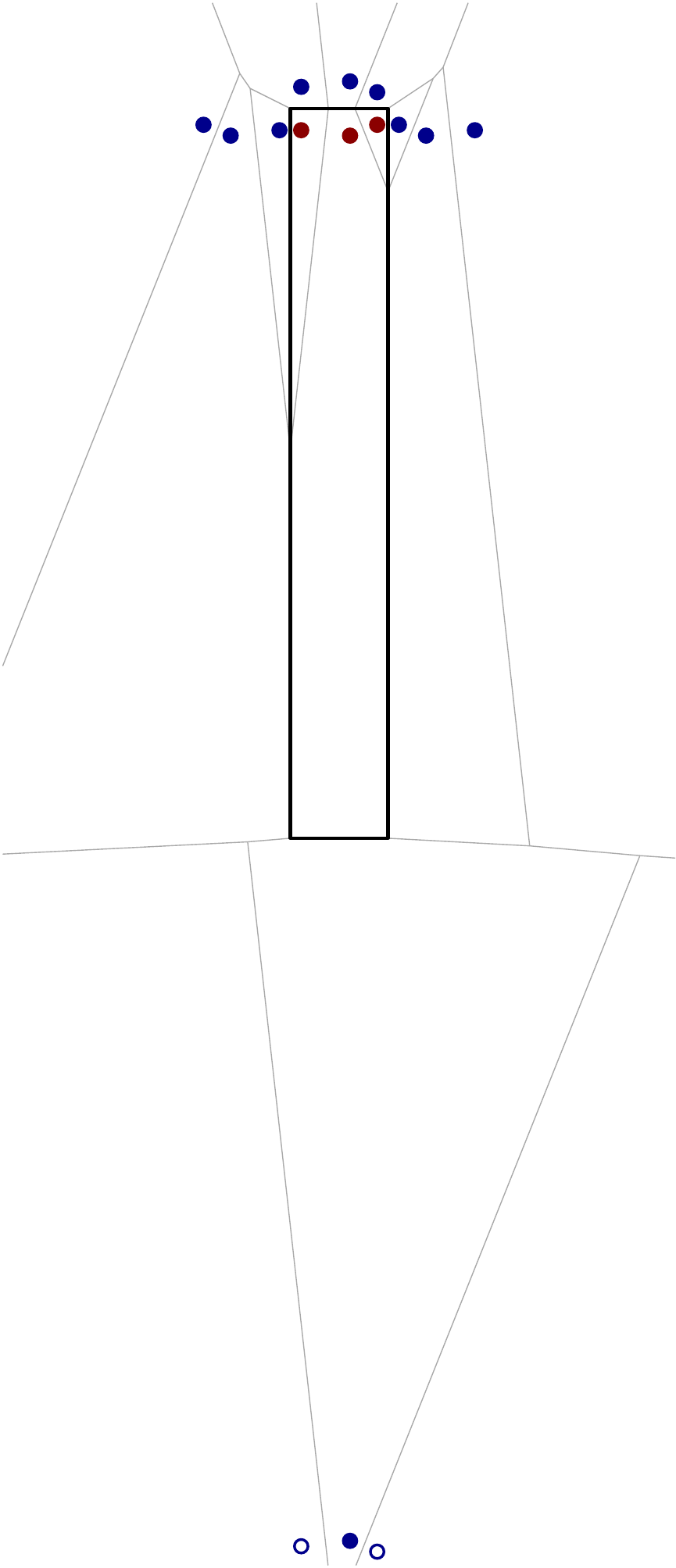}
        \caption{The complete training set.}
        \label{subfig:true_clause_all}
    \end{subfigure}
    \hfill
    \begin{subfigure}{0.45\textwidth}
        \centering
        \includegraphics[width=\textwidth,keepaspectratio,page=5]{figs/clause_true.pdf}
        \caption{Both variables are true.}
        \label{subfig:true_clause_both}
    \end{subfigure}  
    \caption{The clause gadget with the correct distances between the attachment points.}\label{fig:true_clause}
\end{figure}

\end{document}